\documentclass[11pt,a4paper]{article}

\usepackage{eurosym}
\usepackage{amscd,amsfonts,amsmath,amssymb,amsthm,bbm,bm,latexsym,mathrsfs}
\usepackage{epsfig,graphics,graphicx,natbib,subfigure,longtable}
\usepackage[english]{babel}
\usepackage[usenames]{color,colortbl}

\usepackage{bookmark}
\usepackage{booktabs}
\usepackage{rotating}
\usepackage{wasysym}
\usepackage{lscape}
\usepackage[flushleft]{threeparttable}
\usepackage{adjustbox}

\usepackage{sgame}
\usepackage[normalem]{ulem}
\usepackage{lineno} 
\usepackage[top=1in, bottom=1in, left=0.7in, right=0.7in]{geometry}
\usepackage{setspace}
\makeatother

\tolerance=10000
\pretolerance=10000

\theoremstyle{definition}

\newtheorem{proposition}{Proposition}

\newcommand{\cl}[1]{\mathcal{#1}}
\newcommand{\bb}[1]{\mathbb{#1}}
\newcommand{\msf}[1]{\mathsf{#1}}

\newcommand{\comillas}[1]{``\,#1\,"}

\def\eqd{\stackrel{\mbox{\scriptsize{d}}}{=}}

\begin{document}
\title{\vspace{-50pt} On a flexible construction of a negative binomial model}

\author{
\hspace{-20pt}
Fabrizio Leisen\textsuperscript{a} \hspace{15pt}
Rams\'es H. Mena\textsuperscript{b} \hspace{15pt}
Freddy Palma Mancilla\textsuperscript{b} \hspace{15pt}
Luca Rossini\textsuperscript{c}\thanks{Corresponding Author: \href{luca.rossini87@gmail.com}{luca.rossini87@gmail.com}}
 \\
 \\
        {\centering {\small
        \textsuperscript{a}University of Kent, U.K. \hspace{5pt} \textsuperscript{b}IIMAS, UNAM. Mexico \hspace{5pt} \textsuperscript{c}Vrije Universiteit Amsterdam, The Netherlands}}
     }

\date{}

\maketitle

\begin{abstract}
\noindent This work presents a construction of stationary Markov models with negative-binomial marginal distributions. A simple closed form expression for the corresponding transition probabilities is given, linking the proposal to well-known classes of birth and death processes and thus revealing interesting characterizations. The advantage of having such closed form expressions is tested on simulated and real data.
\end{abstract}

\textbf{Keyword:}
Birth and death process; Integer-valued time series model; Negative-binomial distribution; Stationary model.

\doublespacing

\section{Introduction}
Stationary models and their properties are important in theoretical and applied problems in econometrics and in time series modeling. In particular, integer-valued Markov processes have dragged the attention of many scientists. Part of the motivation for their study is to have practical models for time-evolving counts, i.e. without the need to rely on misspecified continuous state space models such as ARMA time series models or diffusion processes. 

There are, at least, two seemingly different approaches to build integer-valued Markov models for  observations evolving in discrete or continuous time, namely the Markov chain approach \citep[e.g.][]{Karlin75} and the stochastic thinning or contraction technique. The latter is largely based on the concept of self-decomposability \citep[e.g.][]{SteutelvanHarn03} or generalizations of it \citep[e.g.][]{joe1996time,ZhuJoe10}. Indeed, for count data time series, contributions such as \citet{McKenzie86,McKenzie88}, \cite{AlOsh87}, \cite{Alzaid90}, \cite{Du91} and \cite{AlOsh92} used the thinning operator to construct AR-type models with Poisson, negative-binomial and geometric stationary distributions. A more general approach, i.e. for a wider choice of marginal distributions, is presented in the groundbreaking contribution by \cite{joe1996time}, see also \cite{jorgensen1998stationary}. For up to date accounts on the topic we refer to \cite{mckenzie2003ch} and \cite{DavisEtal2016}.

Much of the theory of stationary models, with arbitrary but given marginal distributions, has been developed for discrete-time Markov models, with the notable exception of the work set forth by \citet{barndorff2001non}, where the idea of self-decomposability is used to represent Ornstein--Uhlenbeck L\'evy-driven  stochastic differential equations.  For continuous time and discrete state-space  models, clearly the theory of Markov chains offers an excellent and  general modeling alternative. However, within such framework, {{closed form expressions for the transition probabilities are not always at hand}},  thus  complicating estimation and simulation procedures. From a statistical perspective,  the availability of computable transition probabilities is always a desirable feature. Indeed,  having stochastic equations, intensity matrices and/or expressions for Markov generators, frequently leads to overparametrized situations or numerical complications.  

Here, we present a construction of stationary Markov chains with negative-binomial marginal distributions. The construction is valid in discrete and continuous time, and has the appealing feature of having a simple and closed form expression for the corresponding transition probabilities. 
Our construction is based on the idea introduced by \cite{Pitt02} and subsequently generalized  to the continuous time case by \cite{MenaWalker09}. Their approach is based on the distributional symmetry of reversibility and is general enough to encompass other approaches based on thinning, such as the models in \cite{joe1996time}.  The idea can be recast as follows: given a desired marginal distribution,  $P_{X}$,  one
creates the dependence in the model by augmenting via 
another -arbitrary- latent variable with conditional  distribution  $P_{Y\mid X}$. With these probabilities at hand, one computes the compatible conditional distribution $P_{X\mid Y}$ via Bayes' theorem. The one-step transition probability characterizing the targeted Markov process $
P(x,A)=\mathbb{P}(X_1\in A\mid X_0=x)$ is constructed as  
\begin{eqnarray}\label{PCW02}
P(x,A)=\mathbb{E}_{Y\mid x}[P_{X\mid Y}(A)],
\end{eqnarray}
where the expectation is taken with respect to $Y\mapsto P_{Y\mid X}$ conditioned on $X=x$. This approach resembles the Gibbs sampler method used to construct reversible Markov chains and has a clear Bayesian flavor, i.e. one could speak of the prior  $P_{X}$, the likelihood model $P_{Y\mid X}$, the posterior $P_{X\mid Y}$, and the predictive \eqref{PCW02}. Generally speaking, any single-observation based predictive distribution, from an exchangeable sequence,  constitutes a well-defined Markov kernel. As such, the method is  very general,  and for this reason it has been widely used to construct time series models of the AR or ARCH-type, see e.g.  \cite{Pitt05,MenaWalker05,Contreras-Cristan09}.

There are various  ideas  to generalize the above construction to the continuous-time case, e.g. by computing the n-step ahead transition corresponding to \eqref{PCW02} and embedding it as the skeleton of a continuous time process,  or by finding the conditions such that the Chapman-Kolmogorov equations are also satisfied in continuous time. This latter approach is followed by \citet{MenaWalker09} to represent well-know continuous time  models such as various diffusion processes and Markov chain models.  Thus, in order to preserve the reversibility of the model, the idea is to allow those parameters in $P_{Y\mid X}$ -and not in $P_{X}$-, to vary on time in such a way that the  Chapman-Kolmogorov equations are satisfied.  While some appealing families of diffusion processes have been found to have representation \eqref{PCW02}, under the above extension to continuous time \citep[see, e.g.][]{Anzarut18a}, in general it is not always easy to find analytical conditions that meet Chapman-Kolmogorov equations.  Here, we unveil the  conditions to apply this construction  to negative-binomial marginal distributions, which, as we will see, represent a rich class of continuous-time Markov chains.

The model we find has a neat expression for the transition probabilities and corresponds to a well-known class of birth and death processes  \citep{Feller50,Karlin75} for which a closed form expression for  transition probabilities is not available elsewhere. This constitutes an appealing addition to the applied literature on such classes of models.

The choice of negative-binomial marginals is very appealing due to its generality, i.e. it includes the geometric distribution as a particular case and the Poisson distribution as a limiting case. Furthermore, as proven by \citet{Wolpert2011MarkovIS}, all integer-valued stationary time reversible Markov processes, whose invariant distributions are infinitely divisible, are  branching process with Poisson or negative-binomial marginal univariate distributions. Not {{surprisingly}}, discrete and continuous time models with negative-binomial marginals have been widely studied in the literature, e.g. \citet{Latour98,joe1996time,ZhuJoe03,ZhuJoe10,ZhuJoe10b}. However, except for the work by \citet{ZhuJoe10b}, where a numerical inversion technique is used to approximate the continuous time transition probability of a negative-binomial model, no closed form expression for these probabilities has been found. In particular,  we give  a closed expression for the transition probability corresponding to   \citet{ZhuJoe10b} model.  We further illustrate  the convenience of having a simple expression for transition probabilities via a simulation study and a real data analysis based on a crime reports dataset studied by \cite{Gorgi18}. The performance of our approach is compared, in terms of computational time, with the method of \cite{ZhuJoe10b} when the transition is evaluated with numerical methods.

The remaining part of this document is organized as follows: In Section~\ref{Sezione2}, we present the construction of stationary Markov chains with negative-binomial distributed marginals. Section~\ref{Sezione3} is devoted to test our approach using simulated datasets. Section~\ref{Sezione4} tackles a real data example, in particular we use the negative-binomial process to model real-time series of crime reports in the city of Blacktown in Australia. Final points and conclusions are deferred to Section~\ref{Sezione5}.

 \section{Negative Binomial Stationary Markov chain}
\label{Sezione2}

Following the construction in  \cite{MenaWalker09} described above, we  build a set of transition probabilities that characterize   a class of reversible Markov chains with negative-binomial invariant distribution. Hence, given the choice of stationary negative-binomial distribution, i.e. $X\sim\mathsf{NB}(\msf{r},\msf{q})$, with mass probabilities given by
\begin{eqnarray*}
\msf{f}_{X}(x)=\binom{x+\msf{r}-1}{x}\msf{q}^x(1-\msf{q})^\msf{r}\times\textbf{I}_{\{0,1,2,...\}}(x), \quad \mbox{with}\quad \msf{q}\in (0,1).
\end{eqnarray*}
The dependence in the model is introduced by assuming ${Y\mid X}\sim\mathsf{Bin}(X,\Theta)$. Marginalizing $X$, it follows that 
\begin{eqnarray*}
Y\sim\mathsf{NB}\Big(\msf{r}, \frac{\msf{q}\,\Theta}{1-\msf{q}(1-\Theta)}\Big),
\end{eqnarray*}
which then leads to the \comillas{posterior} distribution given by the shifted negative-binomial distribution  ${X\mid Y}\eqd Y+ \mathsf{NB}\left(Y+\msf{r},\msf{q}(1-\Theta)\right)$, with corresponding conditional distribution
\begin{eqnarray*}
\texttt{f}_{X|Y}(x|y)=\binom{x+\msf{r}-1}{x-y}\big[\msf{q}(1-\Theta)\big]^{x-y}\big[1-\msf{q}(1-\Theta)\big]^{y+\msf{r}}\,\textbf{I}_{\{y,y+1,y+2,...\}}(x).
\end{eqnarray*}
Then, one can build a reversible stochastic process $X=\{X_t\}$ whose one-step transition probabilities, $\msf{p}(x_{t-1},x_t):=\mathbb{P}[X_t=x_t\mid X_{t-1}=x_{t-1}]$, take the form 
\begin{align}\label{Transi}
\msf{p}(x_{t-1},x_t)=\sum_{y=0}^{x_{t-1}\wedge x_{t}}&\mbox{NB}(x_t-y;r+y,\mathtt{q}(1-\Theta))\mbox{Bin}(y;x_{t-1},\Theta),
\end{align}
for $x_{t-1},x_t\in \{0,1,2,...\}$.  As such, the resulting model is a well defined discrete-time, integer-valued, autoregressive-type reversible model, for which the transition \eqref{Transi} leaves  $\mathsf{NB}(\msf{r},\msf{q})$ invariant over time. A stochastic equation (SE) for this model is given by
\begin{eqnarray}
X_{n+1}=Y_n+Z_n,\label{AR-NB}
\end{eqnarray}
where $(Z_n)_{n\geq1}$ is a sequence of  random variables with common NB$(\msf{r}+Y_n,\msf{q}(1-\Theta))$ distribution, and $Y_n$ has Bin$(X_n,\Theta)$ distribution. A similar expression to the SE \eqref{AR-NB} was derived in \citet{ZhuJoe03} to characterize a reversible Markov process with negative-binomial invariant distribution. 

In order to generalize the above model to the continuous-time case we allow the {dependence parameter} $\Theta$ to depend on time, namely through a function $t\mapsto \Theta_t$, and thus find the conditions on such a function such that the corresponding transition density $\msf{p}_t$ satisfies the Chapman-Kolmogorov equation. Such task is simplified via the Laplace transform, $ \cl{L}_X(u):=\bb{E}[\msf{e}^{\msf{u}X}]$, associated to the corresponding transition probability function \eqref{Transi},  which reduces to
\begin{eqnarray}\label{CKlaplace}
\cl{L}_{X_t\mid X_0=x}(\msf{u})=\mathbb{E}\big[\msf{e}^{\msf{u}X_t}|X_0=x\big]=
\big([1-\xi_t(\msf{u})]\msf{e}^{-\msf{u}}\big)^{\msf{r}}\big[1-\Theta_t\xi_t(\msf{u})\big]^{x},\label{NB}
\end{eqnarray}
where $\xi_t(\msf{u})=(1-\msf{e}^\msf{u})/[1-\msf{q}(1-\Theta_t)\msf{e}^\msf{u}]$. Notice that, in this latter case, we have already made explicit the dependence on the time parameter. 

\begin{proposition}\label{propCK}
The transition probabilities $ \msf{p}_t$ 
satisfy the Chapman-Kolmogorov equations if and only if
$$\Theta_t=\frac{1-\msf{q}}{\msf{e}^{\msf{c}t}-\msf{q}},$$
for $t,\msf{c}>0$. In such case, $(X_t)_{t\geq0}$ turns out to be a reversible Markov process with negative-binomial stationary  distribution.
\end{proposition}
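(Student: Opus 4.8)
The plan is to work entirely at the level of the Laplace transform \eqref{NB} and reduce the Chapman--Kolmogorov equation to a functional equation for $t\mapsto\Theta_t$. Writing $\cl{L}_{X_t\mid X_0=x}(\msf{u})=A_t(\msf{u})^{\msf{r}}\,B_t(\msf{u})^{x}$ with $A_t(\msf{u})=[1-\xi_t(\msf{u})]\msf{e}^{-\msf{u}}$ and $B_t(\msf{u})=1-\Theta_t\xi_t(\msf{u})$, the semigroup property $\msf{p}_{s+t}=\msf{p}_s\msf{p}_t$ translates, after conditioning on the intermediate state $X_s$ and using that the Laplace transform of $X_{s+t}$ given $X_0=x$ equals $\mathbb{E}[\cl{L}_{X_{s+t}\mid X_s}(\msf{u})\mid X_0=x]$, into
\begin{align*}
A_{s+t}(\msf{u})^{\msf{r}}B_{s+t}(\msf{u})^{x}
= A_t(\msf{u})^{\msf{r}}\,A_s\!\big(\log B_t(\msf{u})\big)^{\msf{r}}\,B_s\!\big(\log B_t(\msf{u})\big)^{x},
\end{align*}
where $\log B_t(\msf{u})$ plays the role of the new Laplace argument coming from the inner expectation. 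Since this must hold for all $x\in\{0,1,2,\dots\}$ and all admissible $\msf{u}$, I would first match the $x$-dependent factors, obtaining the composition identity $B_{s+t}(\msf{u})=B_s(\log B_t(\msf{u}))$, and then separately the identity $A_{s+t}(\msf{u})=A_t(\msf{u})A_s(\log B_t(\msf{u}))$ for the $\msf{r}$-part. The key observation is that the second identity should be an automatic consequence of the first once one checks the algebraic relation between $A_t$ and $B_t$ (both are built from the single function $\xi_t$), so the real content lives in the functional equation for $B_t$.

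Next I would linearise. Introducing the substitution $w=\msf{e}^{\msf{u}}$ and writing everything as a Möbius (linear-fractional) transformation of $w$, one finds that $\xi_t$, hence $B_t$, hence the map $w\mapsto \msf{e}^{\log B_t(\msf{u})}$ appearing inside the composition, is a fractional-linear function of $w$ whose coefficients depend on $\Theta_t$ and $\msf{q}$. The composition identity for $B_{s+t}$ then becomes the statement that $t\mapsto M_t$ is a one-parameter multiplicative semigroup in the group of $2\times2$ matrices (mod scalars) representing these Möbius maps: $M_{s+t}=M_s M_t$. Diagonalising $M_t$ — its fixed points turn out to be the two natural boundary values $w=1$ and $w=1/\msf{q}$ — one sees that the only one-parameter subgroups are of the form where the ratio of the two eigenvalue-coordinates is $\msf{e}^{\msf{c}t}$ for some constant $\msf{c}$. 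Unwinding this back through $B_t$ and solving for $\Theta_t$ yields exactly $\Theta_t=(1-\msf{q})/(\msf{e}^{\msf{c}t}-\msf{q})$; the boundary/normalisation conditions $\Theta_0=1$ (so $\msf{p}_0$ is the identity kernel) and $\Theta_t\in(0,1)$ for $t>0$ force $\msf{c}>0$ and fix the sign, giving the ``only if'' direction, while substituting this $\Theta_t$ back verifies the ``if'' direction.

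Finally, reversibility and stationarity: for any admissible $\Theta_t\in(0,1)$ the discrete-time kernel \eqref{Transi} is reversible with respect to $\mathsf{NB}(\msf{r},\msf{q})$ by the Pitt--Chan--Walker construction recalled above (the latent-variable symmetry makes $\msf{f}_X(x_{t-1})\msf{p}(x_{t-1},x_t)$ symmetric in its arguments), and detailed balance plus the now-established Chapman--Kolmogorov property propagate this to the continuous-time family $(\msf{p}_t)_{t\ge0}$; invariance of $\mathsf{NB}(\msf{r},\msf{q})$ is immediate from reversibility. I expect the main obstacle to be the bookkeeping in the second step — correctly identifying that the inner expectation over $X_s$ replaces the Laplace argument $\msf{u}$ by $\log B_t(\msf{u})$, and then verifying that the $A$-part identity follows for free — since a sign error or a misplaced $\msf{e}^{-\msf{u}}$ there would derail the Möbius-semigroup reduction; the group-theoretic solution of the functional equation itself is then routine.
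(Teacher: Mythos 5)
Your reduction of Chapman--Kolmogorov to the Laplace-transform composition identity $A_{s+t}(\msf{u})^{\msf{r}}B_{s+t}(\msf{u})^{x}=A_t(\msf{u})^{\msf{r}}A_s(\log B_t(\msf{u}))^{\msf{r}}B_s(\log B_t(\msf{u}))^{x}$, with separate matching of the $x$- and $\msf{r}$-powers, is exactly the paper's starting point; where you genuinely diverge is in how the resulting functional equation for $\Theta_t$ is solved. The paper expands both sides as rational functions of $\msf{e}^{\msf{u}}$, matches coefficients to obtain two scalar identities, and observes that both collapse to the multiplicative Cauchy equation $\psi_{t+s}=\psi_t\psi_s$ for $\psi_t:=\Theta_t/[1-\msf{q}(1-\Theta_t)]$, whence $\psi_t=\msf{e}^{-\msf{c}t}$ and the stated $\Theta_t$. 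Your M\"obius-semigroup argument reaches the same place more conceptually: in the variable $w=\msf{e}^{\msf{u}}$ the map $w\mapsto B_t$ is indeed fractional-linear with the two common fixed points $w=1$ and $w=1/\msf{q}$ you predict (both check out directly), so the composition semigroup is determined by its multiplier at a fixed point, and a short computation gives $M_t'(1)=\Theta_t/[1-\msf{q}(1-\Theta_t)]$ --- precisely the paper's $\psi_t$ --- with the chain rule delivering $\psi_{s+t}=\psi_s\psi_t$ and the converse following because a M\"obius map with prescribed fixed points is determined by its multiplier. Your route explains \emph{why} the combination $\psi_t$ is the right invariant, at the cost of matrix bookkeeping the paper's coefficient-matching avoids. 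Two points you should not leave implicit: the claim that the $A$-identity ``follows for free'' is true but requires a check (the paper's $\msf{r}$-part identity is a separate equation that happens to be equivalent to the same $\psi$ condition after a short calculation), and passing from $\psi_{s+t}=\psi_s\psi_t$ to $\psi_t=\msf{e}^{-\msf{c}t}$ needs continuity or measurability of $t\mapsto\Theta_t$, an assumption the paper also glosses over. Your detailed-balance argument for reversibility matches the paper's (unargued) final assertion.
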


\begin{proof}
First notice that in terms of the Laplace transform, the Chapman-Kolmogorov equations corresponding to $\msf{p}_t$ are satisfied if and only if  
\begin{eqnarray}\label{CKeq}
\cl{L}_{X_{t+s}\mid X_0=x}(\msf{u})=\bb{E}\left[\cl{L}_{X_{t+s}\mid X_s}(\msf{u})\mid X_0=x\right], 
\end{eqnarray}
where the time-effect in the law of $\{X_{t+s}\mid X_s\}$ enters non-homogenously, i.e. through a function $\Theta_t$. Hence, using \eqref{CKlaplace}, the right-hand side of \eqref{CKeq} takes the form
\small
\begin{eqnarray}
\bigg[\frac{[1-\xi_t(\msf{u})][1-\xi_s(\log[1-\Theta_t\xi_t(\msf{u})])]}{\msf{e}^\msf{u}\big[1-\Theta_t\xi_t(\msf{u})\big]}\bigg]^{\msf{r}}
\Big[1-\Theta_s\xi_s\big(\log[1-\Theta_t\xi_t(\msf{u})]\big)\Big]^{x}.\label{CK1}
\end{eqnarray}
\normalsize
On the other hand, the left-hand side of \eqref{CKeq} is given by
\begin{eqnarray}
\big([1-\xi_{t+s}(\msf{u})]\msf{e}^{-\msf{u}}\big)^{\msf{r}}\big[1-\Theta_{t+s}\xi_{t+s}(\msf{u})\big]^{x}.\label{CK2}
\end{eqnarray}
 Therefore, the Chapman-Kolmogorov equations are fulfilled as long as the following equalities hold,
\begin{eqnarray}
\frac{1-\msf{q}(1-\Theta_s)(1-\Theta_t)}{[1-\msf{q}(1-\Theta_t)][1-\msf{q}(1-\Theta_s)]}=\frac{1}{1-\msf{q}(1-\Theta_{t+s})}\label{1},
\end{eqnarray}
and
\begin{eqnarray}
\frac{1-\Theta_s\Theta_t-\msf{q}(1-\Theta_t)(1-\Theta_s)}{[1-\msf{q}(1-\Theta_t)][1-\msf{q}(1-\Theta_s)]}=\frac{1-\Theta_{t+s}}{1-\msf{q}(1-\Theta_{t+s})}.\label{2}
\end{eqnarray}
Those equations are satisfied whenever
\begin{eqnarray*}
\psi_t\psi_s:=\frac{\Theta_s\Theta_t}{[1-\msf{q}(1-\Theta_t)][1-\msf{q}(1-\Theta_s)]}=\frac{\Theta_{t+s}}{1-\msf{q}(1-\Theta_{t+s})}=:\psi_{t+s},
\end{eqnarray*}
\normalsize
whose solution is given by $\psi_t=\msf{e}^{-\msf{c}t}$ for $\msf{c}\in\mathbb{R}$. Thus, $\Theta_t=(1-\msf{q})/(\msf{e}^{\msf{c}t}-\msf{q})$ for $\msf{c}>0$, given that   $\Theta_t:\mathbb{R}_+\rightarrow[0,1]$.
Therefore, for this  form of $\Theta_t$, the transition probabilities $\msf{p}_t$ satisfy the Chapman-Kolmogorov equations.
\end{proof}

\noindent Using Proposition~\ref{propCK} a closed form expression for the transition probabilities simplify as
\begin{eqnarray}\label{cttran}
\msf{p}_t(x,x_t)
 =\sum_{y=0}^{x\wedge x_t}\binom{x_t+r-1}{x_t-y}\binom{x}{y}\Big[\msf{q}(1-\Theta_t)\Big]^{x_t-y} \Big[1-\msf{q}(1-\Theta_t)\Big]^{r+y} \Theta_t^{y} \Big(1-\Theta_t\Big)^{x-y}.
\end{eqnarray}
Given the discrete state-space nature of the model we have just constructed, it is natural to think that it should coincide with a continuous-time Markov chain. Indeed, L'H\^{o}pital's rule allows us to compute the infinitesimal rates associated to $\msf{p}_t$ as follows
$$\lim_{t\downarrow0}\frac{\msf{p}_t(i,j)}{t}=\lim_{t\downarrow0}\frac{\partial}{\partial t}\msf{p}_t(i,j).$$
Now, since $\lim_{t\downarrow0}\Theta_t=1$ then the limit of most of the summands in $(\partial/\partial t)\msf{p}_t(i,j)$, when $t\downarrow0$, are zero. Using the equality  $\lim_{t\downarrow0}\frac{\partial\Theta_t}{\partial t}=-\frac{\msf{c}}{1-\msf{q}}$ we obtain
\begin{eqnarray*}
\lim_{t\downarrow0}{\frac{\msf{p}_t(i,j)}{t}}=\begin{cases}
\frac{\msf{c}i}{1-\msf{q}}&\mbox{if }j=i-1,\\
\frac{\msf{c}\msf{q}(i+\msf{r})}{1-\msf{q}}&\mbox{if }j=i+1,\\
-\frac{\msf{c}i}{1-\msf{q}}-\frac{\msf{c}\msf{q}(i+\msf{r})}{1-\msf{q}}&\mbox{if }j=i,\\
0&\mbox{for }j\neq i-1,i,i+1.
\end{cases}
\end{eqnarray*}
\color{black} 
Therefore, the process $(X_t)_{t\geq 0}$ reduces to a simple birth, death and immigration process with  death \color{black} rate $\mu=\msf{c}/(1-\msf{q})$,  birth \color{black} rate $\lambda=\msf{c}\msf{q}/(1-\msf{q})$ and immigration rate $\nu=\msf{c}\msf{q} \msf{r}/(1-\msf{q})$, or equivalently, $\msf{q}=\lambda/\mu$, $\msf{r}=\nu/\lambda$ and $\msf{c}=\mu-\lambda$ (\emph{cf.} Kelly, 2011). It is worth noticing that, since $\msf{c}>0$, we have that $\mu>\lambda$, which is consistent with the model. 
Alternatively, we can characterize the above process as the birth and death process with rates $\lambda_n=\lambda n+\nu$ and $\mu_n=\mu n$, also known as linear growth process (\emph{cf.} Karlin, 1975). To the best of our knowledge expression \eqref{cttran} is new in the literature. In particular, if $\msf{r}=1$, $(X_t)_{t\geq 0}$ has a geometric stationary distribution. In such case, the rates are given by $\mu_n=\mu n$ and $\lambda_n=\lambda n+\lambda$, implying that $\lambda=\nu$, which is a variation of the simple queue, $M/M/1$, model characterized as the birth and death process with rates $\mu_n=\mu$ and $\lambda_n=\lambda$. 

Furthermore, generalizing the stochastic equation \eqref{AR-NB}, we deduce that the birth, death and immigration process satisfies the following generalized branching operation
\begin{eqnarray}
X_t=\sum_{j=0}^{X_0}{I_j(\Theta_t)}+Z_t,\label{SE}
\end{eqnarray}
where $I_0=0$, $I_j(\alpha)$s are i.i.d. random variables with common NB$(1,\alpha)$ distribution and $Z_t$ follows a  NB$(\mathtt{r}+Y_t,\mathtt{q}(1-\Theta_t))$, with $Y_t=\sum_{j=0}^{X_0}{I_j(\Theta_t)}$. That is to say, the model enjoys the branching property \citep{Athreya2006}.   Using a different approach, 
\citet{ZhuJoe03} derived a different stochastic equation representation for which no closed expression for the transition function is available. In \cite{ZhuJoe10b}, a numerical inversion technique of the corresponding characteristic function is used for an approximation of transition probabilities. 
Stochastic equation \eqref{SE}, together with  closed form expressions for  transition  probabilities, can be used for the study, estimation and simulation purposes.

Indeed, equation \eqref{cttran} completes the availability of transition probabilities for the class of continuous-time reversible Markov processes with non-negative integer values, that can  be defined via a thinning operation \citep[see][]{Wolpert2011MarkovIS}.
As a twofold observation,  such transitions turn out to be finite sums of positive terms, which can simplify some  simulation and estimation procedures. 

\section{Simulation Experiment}
\label{Sezione3}
In order to  evaluate the performance of our construction, we test the model with different simulated examples. {We consider two different datasets simulated from \eqref{SE},  for the first dataset we assume equally spaced data, i.e.  $\tau_n = t = 1$, while the second dataset features data generated at exponential times with intensity parameter $\lambda = 0.5$.  We also consider two different simulation scenarios: the first one consists in a single dataset of $1000$ observations; the second one consists in  $100$ different datasets of $1000$ observations. }In both cases, we run single experiments taking in consideration the first $250$ observations, the first $500$ and the full dataset. We keep fixed different values of the parameters of interest, in particular, we are interested in the estimation of the parameters of the negative-binomial distribution, $(\msf{q},\msf{r})$, namely the probability of success in each experiment, $\msf{q}\in (0,1)$, and the number of failures until the experiment is stopped ($\msf{r}\ge 0$), respectively.  In addition to these two parameters, we need to estimate the dependency parameter $\msf{c}$.

A Maximum Likelihood Estimation (MLE) approach is adopted in both scenarios. We set the parameter $\msf{c}$ equal to $0.5$ and $1$. Regarding the negative binomial distribution, we set the probability of success parameter in each experiment to $\msf{q} = 0.3, 0.5 $ and $0.7$, and the number of failures parameter, $\msf{r}$,  to $2$ and $5$.
In Tables~\ref{tab1} and \ref{tab2}, we report the results for different values of $\msf{r}$, $\msf{q}$ and $\msf{c}$ for a single chain and over $100$ different datasets respectively, {when the data are assumed to be equally spaced}. In particular, we report the means over the $100$ experiments and the standard deviations (in brackets). Clearly considering the full dataset improves the results, opposed to the case with only a portion of it (i.e. $250$ or $500$ observations).  {{The same analysis has been conducted for data generated at exponential times with intensity parameter $\lambda= 0.5$}}. In Table~\ref{Simu_Iter_Exp}, we report the means and the standard deviations over $100$ different datasets simulated at exponential times.
Therefore, the evidence suggests the estimation method works correctly for the negative-binomial Markov chain and we can proceed by applying it to real data experiment, as we do in the following section.

\begin{table}[t]
\centering
\begin{small}
\begin{tabular}{l|ccc|ccc|ccc}
\hline 
  & $\msf{r}$ & $\msf{q}$ & $\msf{c}$ & $\msf{r}$ & $\msf{q}$ & $\msf{c}$  & $\msf{r}$ & $\msf{q}$ & $\msf{c}$\\
  \multicolumn{1}{c}{\textit{True Value}} & 2 & 0.30 & 0.50 & 2 & 0.50 & 0.50  & 2 & 0.70 & 0.50 \\
\hline 
$T = 250$ & 2.3804 & 0.2864 & 0.4721 &  2.6790 & 0.4468 & 0.5667 & 2.5355 & 0.6400 & 0.6073 \\
$T = 500$ &  2.8200 & 0.2445 & 0.5548 &  2.9365 & 0.4337 & 0.5282 & 2.1239 & 0.6949 & 0.5553 \\
$T = 1000$ &  2.0177 & 0.3103 & 0.4854 & 2.0756 & 0.4980 & 0.4939 & 2.0436 & 0.6937 & 0.5072 \\
\hline 
  \multicolumn{1}{c}{\textit{True Value}} & 2 & 0.30 & 1 & 2 & 0.50 & 1 & 2 & 0.70 & 1 \\
  \hline
$T = 250$ & 2.1469 & 0.2732 & 1.1226 & 1.6836 & 0.5787 & 1.0903  & 1.9114 & 0.7086 & 0.7839 \\
$T = 500$ &  2.0623 & 0.2894 & 0.9673 & 2.0326 & 0.5291 & 1.0183 & 2.0758 & 0.6979 & 0.8350 \\
$T = 1000$ &  2.0100 & 0.2819 & 1.0253 &  2.0574 & 0.5114 & 1.0163 & 2.0578 & 0.6832 & 0.9758 \\
\hline 
  \multicolumn{1}{c}{\textit{True Value}} & 5 & 0.30 & 0.50 &  5 & 0.50 & 0.50 &  5 & 0.70 & 0.50 \\
  \hline 
$T = 250$ &  6.9482 & 0.2568 & 0.4556 & 5.4891 & 0.4695 & 0.5093 & 4.4857 & 0.7421 & 0.5067 \\
$T = 500$ &  6.6369 & 0.2501 & 0.4672 & 5.9744 & 0.4463 & 0.5571 & 4.3819 & 0.7364 & 0.4583 \\
$T = 1000$ & 5.0930 & 0.2985 & 0.4930 & 5.0801 & 0.4773 & 0.5654 & 5.0734 & 0.6938 & 0.5212 \\
\hline
  \multicolumn{1}{c}{\textit{True Value}} & 5 & 0.30 & 1 & 5 & 0.50 & 1 & 5 & 0.70 & 1 \\
  \hline 
$T = 250$ & 5.4157 & 0.3010 & 0.9969 & 6.2832 & 0.4275 & 1.5045 & 4.4588 & 0.7227 & 0.8783 \\
$T = 500$ & 5.7240 & 0.2872 & 1.0208 & 4.8124 & 0.4991 & 1.1054 & 4.9064 & 0.7015 & 0.8917 \\
$T = 1000$ & 5.0294 & 0.3101 & 0.9307 & 5.0545 & 0.4897 & 0.9837 & 4.9891 & 0.7002 & 0.9844 \\
  \hline 
\end{tabular}
\caption{Maximum Likelihood Estimation for different sample sizes $T$ and for different parameter values $(\msf{r},\msf{q},\msf{c})$ for one experiment and for equally spaced data.\label{tab1}}
\end{small}
\label{Simu_1Chain}
\end{table}

\begin{table}[h!]
\centering
\begin{scriptsize}
\begin{tabular}{l|ccc|ccc|ccc}
\hline 
  & $\msf{r}$ & $\msf{q}$ & $\msf{c}$ & $\msf{r}$ & $\msf{q}$ & $\msf{c}$  & $\msf{r}$ & $\msf{q}$ & $\msf{c}$\\[0.04cm] 
  \multicolumn{1}{c}{\textit{True Value}} & 2 & 0.30 & 0.50 & 2 & 0.50 & 0.50 & 2 & 0.70 & 0.50 \\
\hline
$T = 250$ & 5.1924 & 0.2831 & 0.5298 & 2.1449 & 0.4924 & 0.5257 & 2.0586 & 0.6953 & 0.5112 \\ 
& (18.8061) & (0.1119) & (0.1158) & (0.6582) & (0.0763) & (0.1046) &  (0.3654) & (0.0427) & (0.0897) \\
$T = 500$ & 4.1335 & 0.2851 & 0.5113 &  2.0658 & 0.4944 & 0.6959 &  2.0397 & 0.6960 & 0.5075 \\ 
& (16.5762) & (0.0908) & (0.0789) &  (0.4179) & (0.0583) & (1.8185) & (0.2776) & (0.0309) & (0.0633) \\
$T = 1000$ & 2.2101 & 0.2909 & 0.5064 &  2.0330 & 0.4963 & 0.5085 &  2.0543 & 0.6949 & 0.5077 \\ 
& (0.7434) & (0.0583) & (0.0527) & (0.2559) & (0.0369) & (0.0529) & (0.1811) & (0.0208) & (0.0442) \\
\hline 
  \multicolumn{1}{c}{\textit{True Value}} & 2 & 0.30 & 1 & 2 & 0.50 & 1 & 2 & 0.70 & 1 \\
  \hline 
  $T = 250$ &  3.9394 & 0.2951 & 1.0471 & 2.0887 & 0.4954 & 1.0664 & 2.0367 & 0.6959 & 1.0327 \\ 
& (16.5737) & (0.0838) & (0.1967) & (0.5116) & (0.0586) & (0.1853) & (0.3389) & (0.0406) & (0.1966) \\
  $T = 500$ & 2.1304 & 0.2976 & 1.0227 & 2.0953 & 0.4940 & 1.0201 & 2.0202 & 0.6960 & 1.0197 \\ 
& (0.6029) & (0.0576) & (0.1520) & (0.3552) & (0.0457) & (0.1362) & (0.2007) & (0.0263) & (0.1264) \\ 
  $T = 1000$ & 2.0593 & 0.2991 & 1.0018 & 2.0367 & 0.4993 & 1.0018 & 2.0385 & 0.6954 & 1.0149 \\ 
& (0.3889) & (0.0438) & (0.0915) & (0.2715) & (0.0352) & (0.0892) & (0.1592) & (0.0189) & (0.0915) \\
\hline 
  \multicolumn{1}{c}{\textit{True Value}} & 5 & 0.30 & 0.50  & 5 & 0.50 & 0.50 & 5 & 0.70 & 0.50 \\
  \hline
  $T = 250$  &  10.0632 & 0.2617 & 0.5287 & 5.5172 & 0.4862 & 0.5165 & 5.2965 & 0.6897 & 0.5176 \\
& (22.6397) & (0.0974) & (0.0948) & (1.6552) & (0.0718) & (0.1014) & (1.1201) & (0.0451) & (0.0901) \\
  $T = 500$  & 5.9747 & 0.2823 & 0.5134 & 5.2108 & 0.4947 & 0.5069 & 5.1213 & 0.6968 & 0.5091 \\ 
& (2.6223) & (0.0692) & (0.0634) & (0.9452) & (0.0477) & (0.0638) & (0.7501) & (0.0307) & (0.0640) \\ 
  $T = 1000$  &5.5866 & 0.2838 & 0.5152 & 5.0699 & 0.4992 & 0.4986 & 5.0455 & 0.6992 & 0.5047 \\ 
& (1.3622) & (0.0456) & (0.0486) & (0.5839) & (0.0301) & (0.0403) & (0.4948) & (0.0210) & (0.0420) \\
\hline 
  \multicolumn{1}{c}{\textit{True Value}} & 5 & 0.30 & 1 & 5 & 0.50 & 1 & 5 & 0.70 & 1 \\
  \hline 
  $T = 250$ & 6.1472 & 0.2842 & 1.0795 & 5.3822 & 0.4847 & 1.0574 & 5.1757 & 0.6949 & 1.0442 \\
& (3.4029) & (0.0736) & (0.1740) & (1.0049) & (0.0486) & (0.1976) & (0.6896) & (0.0284) & (0.1577) \\
  $T = 500$ &  5.6657 & 0.2872 & 1.0382 & 5.2481 & 0.4894 & 1.0327 & 5.0804 & 0.6976 & 1.0281 \\
& (1.9795) & (0.0572) & (0.1207) & (0.7683) & (0.0373) & (0.1202) & (0.5299) & (0.0221) & (0.1020) \\
  $T = 1000$ &  5.4636 & 0.2864 & 1.0259 & 5.1530 & 0.4933 & 1.0156 & 5.0521 & 0.6985 & 1.0214 \\ 
   &(1.1945) & (0.0412) & (0.0848) & (0.5224) & (0.0286) & (0.0965) &(0.3779) & (0.0157) & (0.0790) \\
  \hline 
\end{tabular}
\caption{Maximum Likelihood Estimation for different sample sizes, $T$, and for different parameter values $(\msf{r},\msf{q},\msf{c})$, over $100$ different simulations for equally spaced data.\label{tab2}}
\end{scriptsize}
\label{Simu_Iter}
\end{table}

\begin{table}[h!]
\centering
\begin{scriptsize}
\begin{tabular}{l|ccc|ccc|ccc}
\hline 
  & $\msf{r}$ & $\msf{q}$ & $\msf{c}$ & $\msf{r}$ & $\msf{q}$ & $\msf{c}$  & $\msf{r}$ & $\msf{q}$ & $\msf{c}$\\
  \multicolumn{1}{c}{\textit{True Value}} & 2 & 0.30 & 0.50 & 2 & 0.50 & 0.50 & 2 & 0.70 & 0.50 \\
\hline 
  $T = 250$ & 4.6510 & 0.2704 & 0.6571 & 2.1959 & 0.4853 & 0.6936 & 2.0938 & 0.6933 & 0.6990 \\ 
& (16.6464) & (0.1039) & (0.1036) & (0.6127) & (0.0707) & (0.1325) & (0.4131) & (0.0466) & (0.1225) \\
  $T = 500$ & 2.3150 & 0.2853 & 0.6468 & 2.0589 & 0.4977 & 1.8029 & 2.0238 & 0.7005 & 0.6789 \\ 
& (0.7978) & (0.0660) & (0.0820) & (0.4692) & (0.0521) & (4.5160) & (0.2782) & (0.0290) & (0.0898) \\ 
  $T = 1000$ & 2.1454 & 0.2939 & 0.6396 & 2.0396 & 0.4960 & 0.6646 & 2.0035 & 0.7017 & 0.6695 \\ 
& (0.5590) & (0.0463) & (0.0553) & (0.2690) & (0.0337) & (0.0581) & (0.2035) & (0.0213) & (0.0620) \\
\hline 
  \multicolumn{1}{c}{\textit{True Value}} & 2 & 0.30 & 1 & 2 & 0.50 & 1 & 2 & 0.70 & 1 \\
  \hline
  $T = 250$ & 5.2547 & 0.2882 & 1.0930 & 2.1036 & 0.4928 & 1.1165 & 2.1026 & 0.6900 & 1.1219 \\ 
& (20.5733) & (0.0962) & (0.2365) & (0.5095) & (0.0609) & (0.2205) & (0.3163) & (0.0374) & (0.2281) \\
  $T = 500$ & 2.3421 & 0.2882 & 1.0730 & 2.0841 & 0.4926 & 1.0943 & 2.0280 & 0.6976 & 1.0770 \\ 
& (1.1416) & (0.0713) & (0.1621) & (0.3529) & (0.0436) & (0.1541) & (0.2246) & (0.0274) & (0.1502) \\ 
  $T = 1000$ & 2.1014 & 0.2976 & 1.0640 & 2.0347 & 0.4958 & 1.0844 & 1.9933 & 0.7004 & 1.0767 \\ 
& (0.5394) & (0.0488) & (0.1107) & (0.2313) & (0.0279) & (0.1052) & (0.1611) & (0.0190) & (0.0975) \\
\hline 
  \multicolumn{1}{c}{\textit{True Value}} & 5 & 0.30 & 0.50  & 5 & 0.50 & 0.50 & 5 & 0.70 & 0.50 \\
  \hline 
  $T = 250$  &  8.4702 & 0.2829 & 0.6920 & 5.5603 & 0.4872 & 0.6865 & 5.1636 & 0.6961 & 0.7036 \\ 
& (16.8198) & (0.0981) & (0.1304) & (1.8687) & (0.0760) & (0.1423) & (1.0704) & (0.0396) & (0.1317) \\
$T = 500$  & 5.5907 & 0.2916 & 0.6687 & 5.3010 & 0.4905 & 0.6853 & 5.1234 & 0.6963 & 0.6959 \\ 
& (1.9261) & (0.0635) & (0.0859) & (1.0392) & (0.0504) & (0.1035) & (0.6104) & (0.0256) & (0.0799) \\ 
$T = 1000$  & 5.4972 & 0.2898 & 0.6628 & 5.1664 & 0.4960 & 0.6774 & 5.0589 & 0.6985 & 0.6933 \\ 
& (1.3966) & (0.0496) & (0.0543) & (0.8282) & (0.0395) & (0.0684) & (0.4402) & (0.0186) & (0.0651) \\
\hline 
  \multicolumn{1}{c}{\textit{True Value}} & 5 & 0.30 & 1 & 5 & 0.50 & 1 & 5 & 0.70 & 1 \\
  \hline 
  $T = 250$ & 7.3376 & 0.2749 & 1.1729 & 5.2690 & 0.4952 & 1.1158 & 5.1834 & 0.6933 & 1.1725 \\
& (6.9890) & (0.0905) & (0.2769) & (1.3803) & (0.0632) & (0.2066) & (0.8076) & (0.0331) & (0.2278) \\
$T = 500$  & 5.7321 & 0.2837 & 1.1232 & 5.0035 & 0.5035 & 1.0971 & 5.0477 & 0.6984 & 1.1411 \\
& (1.6732) & (0.0565) & (0.1638) & (0.8465) & (0.0412) & (0.1548) & (0.5084) & (0.0224) & (0.1550) \\
$T = 1000$  & 5.3743 & 0.2924 & 2.3472 & 5.0271 & 0.5011 & 1.0922 & 5.0267 & 0.6994 & 1.1170 \\
& (1.1959) & (0.0439) & (5.0475) & (0.5372) & (0.0269) & (0.0993) & (0.3634) & (0.0158) & (0.1032) \\
  \hline 
\end{tabular}
\caption{Maximum Likelihood Estimation for different sample sizes, $T$, and for different parameter values, $(\msf{r},\msf{q},\msf{c})$, over $100$ different simulations for exponential times data.\label{tab3}}
\end{scriptsize}
\label{Simu_Iter_Exp}
\end{table}

\section{Application to Crime Data}
\label{Sezione4}

This section is devoted to the study of the monthly number of offensive conduct reported in the city of Blacktown, Australia, from January 1995 to December 2014. Following \cite{Gorgi18}, we employ our time series approach to the New South Wales (NSW) dataset of police reports provided by the NSW Bureau of Crime Statistics and Research and currently available at \href{http://www.bocsar.nsw.gov.au/}{http://www.bocsar.nsw.gov.au/}. 

Figure \ref{Real_Plot} shows the plot of the time series and the presence of two peaks, related to high level of criminal activities in 2002 and 2010. Looking at the data, the sample mean and variance are $9.2625$ and $24.253$, respectively. As suggested by \cite{Gorgi18}, this indicates over-dispersion, and thus a negative binomial distribution for the error term is a  suitable assumption for the data.

\begin{figure}[h!]
\centering
\begin{tabular}{cc}
{\includegraphics[width = 5.5cm]{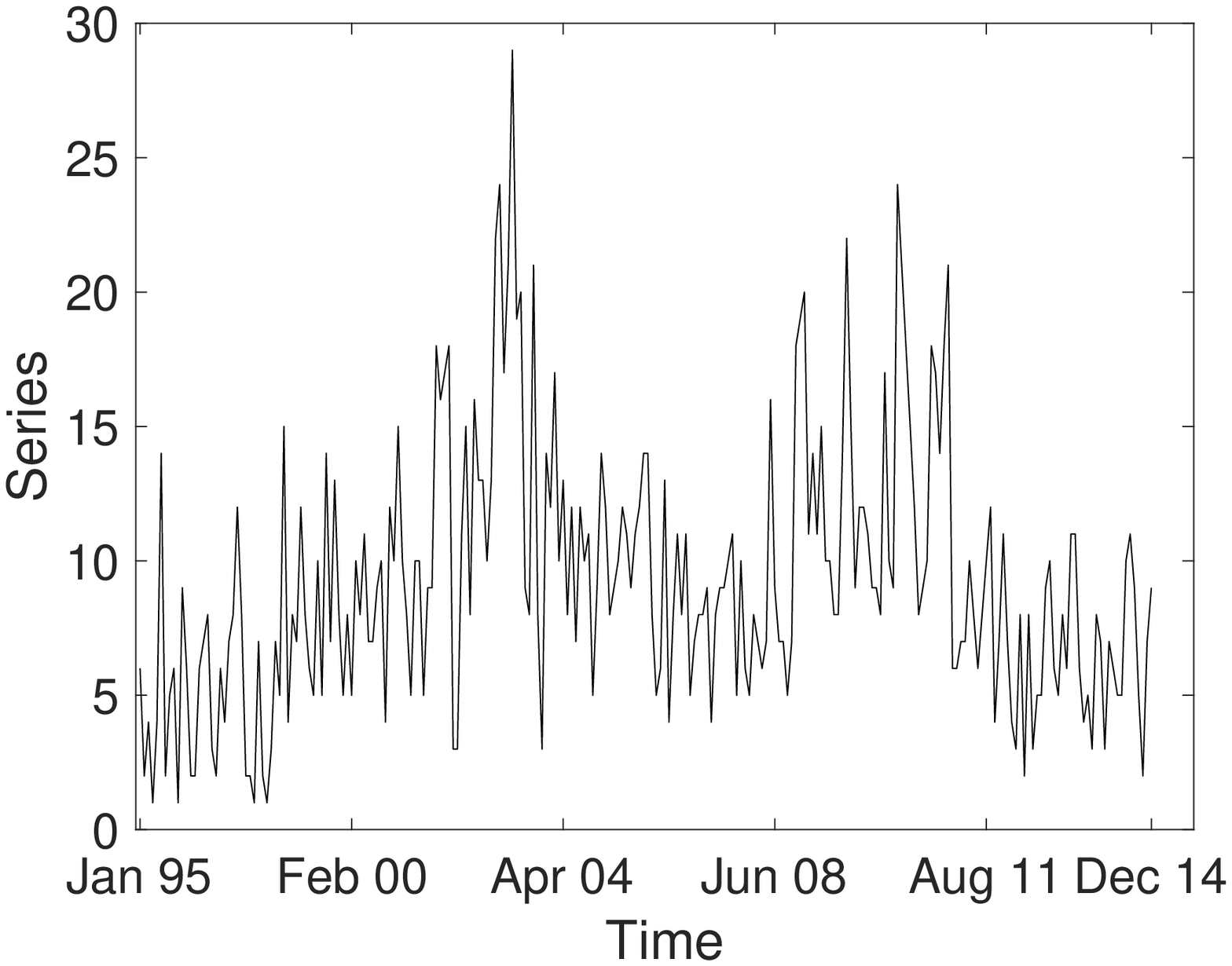}} & 
{\includegraphics[width = 5.5cm]{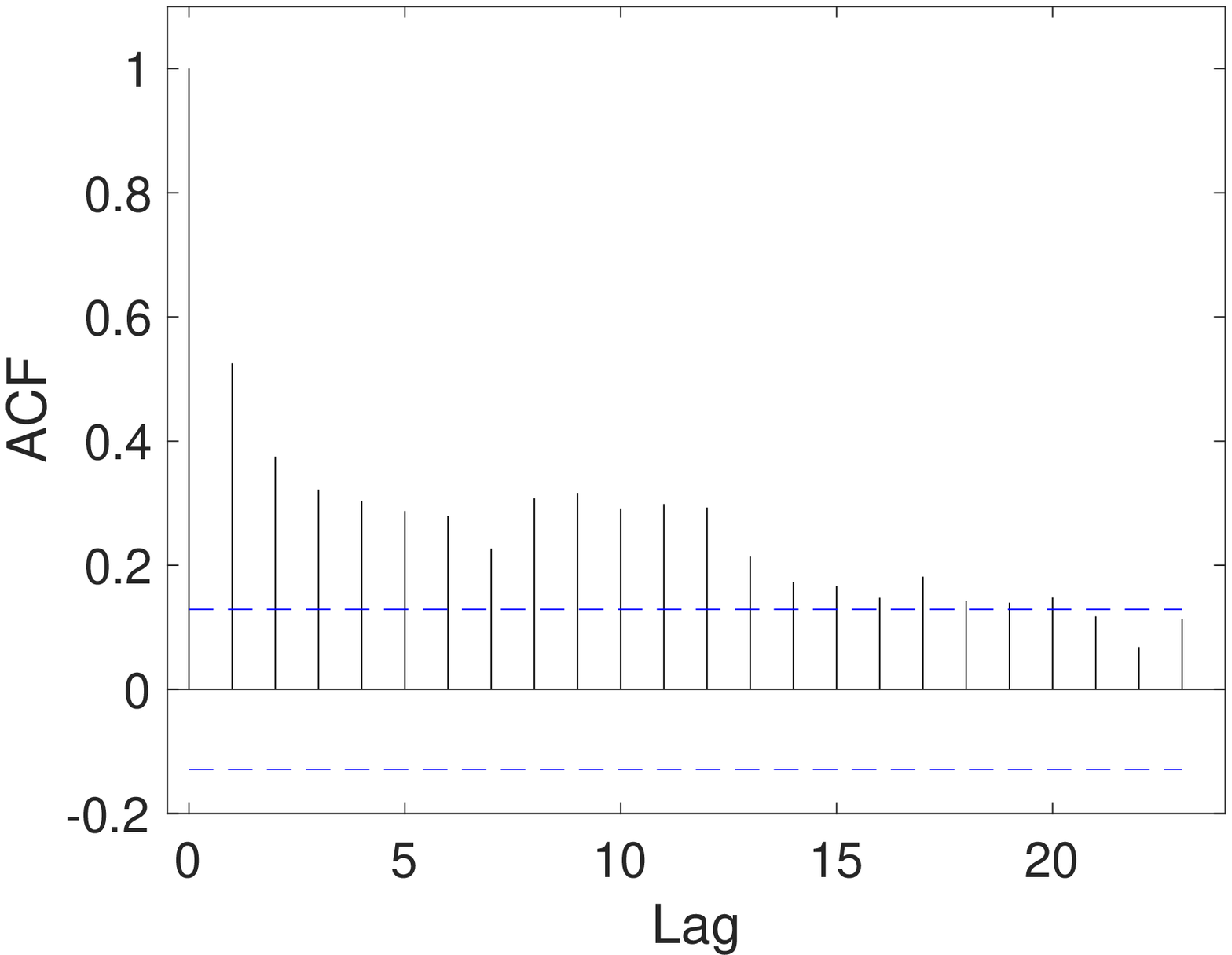}} \\ 
\end{tabular}
\caption{Monthly number of offensive conduct reports in Blacktown from January 1995 to December 2014 (left panel) and sample autocorrelation functions of the series (right panel).}
\label{Real_Plot}
\end{figure}

In Table \ref{Real_Comp}, we compare the estimation values of $(\msf{r},\msf{q},\msf{c})$ and the computational times of our exact transition probability ({\sf{Exact}}), Joe's transition probability by computing numerically the integral in Matlab ({\sf{Numerical Inversion 1}}) and Joe's transition probability by using an arbitrary precision integral ({\sf{Numerical Inversion 2}}).  The results confirm that the three methods lead to the same results, but our transition probability is much faster than the other two methods, moving from 17 seconds to 46 minutes (for the numerical integration) and 5 hours (for the arbitrary precision integral). Due to the high time consuming algorithm, in the forecasting approach, we decide to run only our method.

\begin{table}[h!]
\centering
\begin{tabular}{ccc}
\hline
Transition probability & $(\msf{r},\msf{q},\msf{c})$ & Time (in s) \\
\hline 
{\sf{Exact}} & (6.0865, 0.6031, 0.6848) & 17.65 sec \\
{\sf{Numerical Inversion 1}} & (6.0865, 0.6486, 0.7237) & 2575.60 sec \\ 
{\sf{Numerical Inversion 2}} & (6.5110, 0.4128, 0.9002) & 17085.02 sec \\ 
\hline
\end{tabular}
\caption{Comparison of the estimation values and computational timing (in seconds) between our transition probability ({\sf{Exact}}) and the Joe's transition probability ({\sf{Numerical Inversion 1}} and {\sf{Numerical Inversion 2}}). \label{tab4}}
\label{Real_Comp}
\end{table}

\noindent We perform an  out-of-sample experiment to compare our results with the forecast performances in \cite{Gorgi18}. We further divide the time series into two subsamples: the first $140$ observations are the in-sample analysis and the last $100$ observations are the out-of-sample analysis or forecasting evaluation sample. In particular, the in-sample analysis is expanded recursively.  The accuracy of the forecasting procedure is measured in terms of both point and density forecasting capabilities. Hence, we evaluate the point forecast accuracy via the mean square error (MSE) given by 
\begin{equation}
\text{MSE} = \frac{1}{100} \sum_{i=1}^{100} \left(\hat{y}_{T+i} - y_{T+i}\right)^2. \label{MSE}
\end{equation}
On the other hand, we measure the density forecasting accuracy by means of the log predictive score criterion given by  
\begin{equation}
\text{PL} = \frac{1}{100} \sum_{i=1}^{100} \log{\hat{p}_{T+i|T+i-1}(y_{T+i}}). \label{PL}
\end{equation}

\noindent Table \ref{Tab_real} shows the results in terms of point and density forecasting at different horizons $h$. We note that our model performs well, when compared to the different models proposed in \cite{Gorgi18}, in  particular when compared to the GAS-NBINAR model. Overall, we can conclude that the negative-binomial continuous-time Markov process performs well for estimation and forecasting purposes.

\begin{table}[h]
\centering
\begin{tabular}{ccccc}
\hline \\[-0.4cm]
& h=1 & h=2 & h=3 & h=4 \\[0.04cm]
\hline \\[-0.4cm]
\text{Mean square error} & 15.461 & 17.627 & 19.794 & 21.138 \\[0.04cm]
\text{Log score criterion} & -2.731 & -2.797 & -2.857 & -2.873 \\
\hline
\end{tabular}
\caption{Forecast mean square error and log score criterion computed using the last $100$ observations for different forecast horizons h.}
\label{Tab_real}
\end{table}

\section{Conclusions}
\label{Sezione5}

We present a {{closed}} form expression for the transition density of a negative-binomial continuous time Markov model. This model can be used in both, discrete and continuous time, and it is of great interest when modeling integer-valued time series. Our construction results in simpler simulation and estimation procedures.  Furthermore, it links nicely with models coming from the continuous time Markov chains literature, thus opening a gateway for its use also in that area. 

\section*{Acknowledgements}
{{We thank the Editor, Associate Editor and an anonymous reviewer for the useful comments which significantly improved the presentation and quality of the paper}}. Fabrizio Leisen was supported by the European Community's Seventh Framework Programme [FP7/2007-2013] under
grant agreement no: 630677. Ramses H. Mena gratefully acknowledges the support of CONTEX project 2018-9B.  
Luca Rossini has received funding from the European Union’s Horizon 2020 research and innovation programme under the Marie Sklodowska-Curie grant agreement No 796902.

\bibliographystyle{apalike}
\bibliography{NbBiblio}

\end{document}